\newtheorem{theorem}{Theorem}
\newtheorem{lemma}[theorem]{Lemma}
\newtheorem{proposition}[theorem]{Proposition}
\newtheorem{remark}[theorem]{Remark}
\newcommand{\be}{\begin{equation}}
\newcommand{\ee}{\end{equation}}
\newcommand{\bea}{\begin{eqnarray*}}
	\newcommand{\eea}{\end{eqnarray*}}
\newcommand{\beq}{\begin{eqnarray}}
\newcommand{\eeq}{\end{eqnarray}}
\title[Comparing the spectrum of Schrödinger operators on quantum graphs]{Comparing the spectrum of Schrödinger operators on quantum graphs} 
\subjclass[2010]{}
\keywords{}
\author[P.~Bifulco]{Patrizio Bifulco}
\author[J.~Kerner]{Joachim Kerner}
\address{Lehrgebiet Analysis, Fakult\"at Mathematik und Informatik, Fern\-Universit\"at in Hagen, D-58084 Hagen, Germany}
\email{patrizio.bifulco@fernuni-hagen.de}
\address{Lehrgebiet Analysis, Fakult\"at Mathematik und Informatik, Fern\-Universit\"at in Hagen, D-58084 Hagen, Germany}
\email{joachim.kerner@fernuni-hagen.de}
\date{\today}
\thanks{
}
\begin{document}

	\begin{abstract} We study Schrödinger operators on compact finite metric graphs subject to $\delta$-coupling and standard boundary conditions. We compare the $n$-th eigenvalues of those self-adjoint realizations and derive an asymptotic result for the mean value of deviations. By doing this, we generalize recent results from~\cite{RWY} obtained for domains in $\mathbb{R}^2$ to the setting of quantum graphs. This also leads to a generalization of related results previously and independently obtained in~\cite{GiladSofer2022} and~\cite{BSSS} for metric graphs. In addition, based on our main result, we introduce some notions of \emph{circumference} for a (quantum) graph which might prove useful in the future.
	\end{abstract}
	
\maketitle

	\section{Introduction}
	
	Our paper is primarily motivated by a recent work of Rudnick, Wigman, and Yesha in which they study the difference of Robin and Neumann eigenvalues for the Laplacian on domains in $\mathbb{R}^2$. Looking at the difference of the corresponding $n$-th eigenvalues, $d_n(\sigma):=\lambda_n(\sigma)-\lambda_n(0)$,  they derive a limit for the mean value of gaps. More explicitly, for bounded domains $\Omega \subset \mathbb{R}^2$ with piecewise smooth boundary, they show that the limit of the arithmetic mean of $d_n(\sigma)$ equals
	\begin{equation*}
	\frac{2|\partial \Omega|}{|\Omega|}\cdot \sigma
	\end{equation*}
with $\sigma > 0$ denoting the Robin parameter. In addition to that, they also study upper and lower bounds on $d_n(\sigma)$. It is interesting to note that similar questions were already studied in~\cite{RR} for star graphs with Dirichlet ends and, in particular, in~\cite{GiladSofer2022} and~\cite{BSSS} for general metric graphs, respectively. However, in this paper we shall follow the strategy of~\cite{RWY} to derive an explicit limit value for the arithmetic mean of $d_n(\sigma)$ on general finite compact, connected quantum graphs. It is important to note that, using results of~\cite{BER} and combining them with some methods of~\cite{BHJ}, we are able to allow for smooth bounded potentials supported on the edges of the graph. A key ingredient in the proof of our main result is a so-called \emph{local Weyl law} which has been established for graphs (with standard boundary conditions) already in~\cite{BHJ} but without potentials. In this way, we are able to generalize some results from~\cite{BHJ}, too. At this point we should note that, for certain boundary conditions and without potentials, a local Weyl law is also derived in~\cite{GiladSofer2022} and~\cite{BSSS} using different methods. 

 Furthermore, based on results of~\cite{KS}, we are able to prove a uniform upper bound on $d_n(\sigma)$; by providing a counterexample, we show that a uniform lower bound cannot hold for arbitrary graphs which was already noted in~\cite{RR} (see, for instance, Proposition 1.6) in their setting (for a more detailed discussion on upper bounds and the distribution of $d_n(\sigma)$ we refer to~\cite{BSSS}). Finally, based on our main result, we propose some notions of \emph{circumference} for a (quantum) graph which might prove interesting in the future. For example, what impact has the circumference on the spectral properties of the underlying (quantum) graph?

\section{The setting}

	We study compact, connected finite quantum graphs $\Gamma=\Gamma(\mathcal{E},\mathcal{V})$ with (finite) edge set $\mathcal{E}$ and vertex set $\mathcal{V}$. To each edge $e \in \mathcal{E}$ one associates an interval $(0,l_e)$ with $0 < l_e < \infty$ denoting the \emph{edge length}. The number of edges connected to a vertex $v \in \mathcal{V}$ is called the \textit{degree} and shall be denoted by $\deg_v \in \mathbb{N}$. The relevant Hilbert space is given by 
	\begin{equation}\label{HilbertSpace}
	L^2(\Gamma)=\bigoplus_{e \in \mathcal{E}}L^2(0,l_e)\ .
	\end{equation}
	We write $\mathcal{L}=\sum_{e \in \mathcal{E}}l_e$ for the total length of the graph $\Gamma$. On $L^2(\Gamma)$ we then introduce the Schrödinger operator 
	\begin{equation}\label{SchrödingerOperator}
	-\Delta+V
	\end{equation}
	which acts, on suitable functions $f=(f_e)_{e \in \mathcal{E}} \in L^2(\Gamma)$, via 
	\begin{equation*}
	[\left(-\Delta+V)f\right]_e(x)=-f_e^{\prime \prime}(x)+v_e(x)f_e(x)\ , \quad x \in (0,l_e)\ .
	\end{equation*}
	In this paper we shall assume that the potential is real-valued with $v_e \in (C^{\infty}\cap L^{\infty})(0,l_e)$ for all $e \in \mathcal{E}$. This assumption guarantees that the operator $-\Delta+V$ is self-adjoint on a domain on which $-\Delta$ is self-adjoint. By $V_{-}(x):=-\min\{0,V(x)\}$ we denote the negative part of the potential $V$.
	
	We are particularly interested in two domains of self-adjointness of $-\Delta$: The well-known Kirchoff-Neumann (or standard) realization is obtained by defining $-\Delta$ on the domain
	\begin{equation}
	\mathcal{D}_{0}:=\bigg\{f\in H^1(\Gamma): f_e \in H^2(0,l_e) \ \text{and}\  \sum_{e \sim v}\partial_v f_e(v)=0  \bigg\}
	\end{equation}
	where $H^1(\Gamma)$ consists of all functions $f \in L^2(\Gamma)$ for which $f_e \in H^1(0,l_e)$, $e \in \mathcal{E}$, and which are continuous across vertices. Note that we write $f_e(v)$ for the boundary value of $f_e \in H^1(0,l_e)$ at either $x=0$ or $x=l_e$, depending on the given vertex $v \in \mathcal{V}$. We also sometimes write $f(v)$ for functions $f \in H^1(\Gamma)$, referring to the value of one $f_e(v)$ at the given vertex. In addition, $\partial_v f_e(v)$ refers to the \emph{inward} derivative of the component $f_e(x)$ at either $x=0$ or $x=l_e$ (note that this leads to a minus sign whenever $x=l_e$).
	
	The other self-adjoint realization of $-\Delta$ we are interested in  comes with the domain 
	\begin{equation}
	\mathcal{D}_{\sigma}:=\bigg\{f\in H^1(\Gamma): f_e \in H^2(0,l_e) \ \text{and}\  \sum_{e \sim v}\partial_v f_e(v)=\sigma_v f_e(v)  \bigg\}
	\end{equation}
	where $\sigma\equiv (\sigma_v)_{v \in \mathcal{V}} \in \mathbb{R}^{|\mathcal{V}|}$. Note that this self-adoint realization is well-known under the name $\delta$-coupling condition. Furthermore, note that the standard Kirchoff-Neumann realization is obtained through choosing $\sigma= (0,\dots,0)$. 
	
	Consequently, we are dealing with the two self-adjoint (Schrödinger) operators $H^V_0=(-\Delta+V,\mathcal{D}_{0})$ and $H^V_{\sigma}=(-\Delta+V,\mathcal{D}_{\sigma})$. Since the underlying quantum graph is finite and compact, both operators have purely discrete spectrum. We shall denote the eigenvalues of $H^V_0$ as $\lambda^V_0(0) \leq \lambda^V_1(0) \leq \dots$ and the eigenvalues of $H^V_{\sigma}$ as $\lambda^V_0(\sigma) \leq \lambda^V_1(\sigma) \leq \dots$, counting them with multiplicity in both cases. Also, $f^{0,V}_n$ and $f^{\sigma,V}_n$ refer to corresponding orthogonal and normalized eigenfunctions.
		
	Motivated by~\cite{RWY} and~\cite{KS}, the quantities we are interested in are defined by
\begin{equation}\begin{split}\label{DefinitionDifference}
d^V_n(\sigma)&:=\lambda^V_n(\sigma)-\lambda^V_n(0)\ , \\
\hat{d}^V_n(\sigma)&:=\lambda^V_n(\sigma)-\lambda^{V=0}_n(0)\ , \quad n \in \mathbb{N}\ .
\end{split}
\end{equation}

	We now formulate our main result which is a generalized version of~\cite[Theorem~1.1]{RWY} for quantum graphs. We note that a special version of this theorem was previously and independently obtained in~\cite[Theorem 3.3]{GiladSofer2022} and~\cite[Theorem 1.3]{BSSS}, respectively. 
	
	\begin{theorem}\label{MainResult} Let $H^V_0$ and $H^V_{\sigma}$ be the two self-adjoint Schrödinger operators on a finite compact, connected quantum graph $\Gamma$ with properties described above. Then one has
		\begin{equation}\label{RelationI}
		\lim_{N \rightarrow \infty}\frac{1}{N}\sum_{n=1}^{N} d^V_n(\sigma)=\frac{2}{\mathcal{L}}\cdot \sum_{v \in \mathcal{V}} \frac{\sigma_v}{\deg_v}\ .
		\end{equation}
	Furthermore, comparing $H^{V=0}_0$ with $H^V_{\sigma}$ one obtains
		\begin{equation}\label{RelationII}
	\lim_{N \rightarrow \infty}\frac{1}{N}\sum_{n=1}^{N} \hat{d}^V_n(\sigma)=\frac{2}{\mathcal{L}}\left(\sum_{v \in \mathcal{V}} \frac{\sigma_v}{\deg_v}+\frac{1}{2}\sum_{e \in \mathcal{E}}\int_{0}^{l_e}v_e(x)\mathrm{d} x\right)\ .
	\end{equation}
	\end{theorem}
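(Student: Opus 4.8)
The plan is to reduce both identities to a \emph{local Weyl law} for boundary (and bulk) values of eigenfunctions, combined with a first-order perturbation (Feynman--Hellmann) formula that is then integrated along a one-parameter family of couplings. First I would fix the family of quadratic forms
\[
Q_{t\sigma}[f]=\sum_{e\in\mathcal{E}}\int_0^{l_e}\big(|f_e'|^2+v_e|f_e|^2\big)\,\mathrm dx+\sum_{v\in\mathcal{V}}t\sigma_v|f(v)|^2,\qquad t\in[0,1],
\]
all defined on the common form domain $H^1(\Gamma)$, so that $H^V_{t\sigma}$ interpolates between $H^V_0$ at $t=0$ and $H^V_\sigma$ at $t=1$. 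The ordered eigenvalues $t\mapsto\lambda^V_n(t\sigma)$ are then Lipschitz, and at the (a.e.) points of differentiability where $\lambda^V_n(t\sigma)$ is simple the Feynman--Hellmann formula gives $\frac{\mathrm d}{\mathrm dt}\lambda^V_n(t\sigma)=\sum_{v}\sigma_v|f^{t\sigma,V}_n(v)|^2$, since only the vertex term of $Q_{t\sigma}$ depends explicitly on $t$. Integrating in $t$ yields the pointwise identity $d^V_n(\sigma)=\int_0^1\sum_{v}\sigma_v|f^{t\sigma,V}_n(v)|^2\,\mathrm dt$, the crossing set being negligible for the integral.

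Next I would average over $n$, interchange the (finite) sum with the $t$-integral, and invoke the local Weyl law of~\cite{BHJ}, extended to accommodate the potential by combining it with~\cite{BER}: for every fixed vertex $v$ and every fixed coupling,
\[
\lim_{N\to\infty}\frac1N\sum_{n=1}^N|f^{t\sigma,V}_n(v)|^2=\frac{2}{\mathcal{L}\,\deg_v}.
\]
The crucial feature is that this is a high-energy quantity, hence independent of both $t$ and $V$. Using the uniform bound on $d^V_n(\sigma)$ coming from~\cite{KS} to justify dominated convergence, I pass the limit through the $t$-integral and obtain
\[
\lim_{N\to\infty}\frac1N\sum_{n=1}^N d^V_n(\sigma)=\int_0^1\sum_{v\in\mathcal{V}}\sigma_v\,\frac{2}{\mathcal{L}\,\deg_v}\,\mathrm dt=\frac{2}{\mathcal{L}}\sum_{v\in\mathcal{V}}\frac{\sigma_v}{\deg_v},
\]
which is~\eqref{RelationI}.

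For~\eqref{RelationII} I would split $\hat d^V_n(\sigma)=d^V_n(\sigma)+\big(\lambda^V_n(0)-\lambda^{V=0}_n(0)\big)$ and treat the second difference by the same scheme, now scaling the potential as $sV$, $s\in[0,1]$, at fixed standard coupling. Here Feynman--Hellmann gives $\frac{\mathrm d}{\mathrm ds}\lambda^{sV}_n(0)=\sum_{e}\int_0^{l_e}v_e|f^{0,sV}_n|^2\,\mathrm dx$, and the bulk version of the local Weyl law, $\frac1N\sum_{n=1}^N|f^{0,sV}_n(x)|^2\to 1/\mathcal{L}$, leads to $\frac1N\sum_{n=1}^N(\lambda^V_n(0)-\lambda^{V=0}_n(0))\to\frac{1}{\mathcal{L}}\sum_{e}\int_0^{l_e}v_e\,\mathrm dx$. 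Adding this to~\eqref{RelationI} and factoring out $2/\mathcal{L}$ produces the stated expression.

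The main obstacle I anticipate is precisely the local Weyl law in the presence of a potential together with the required uniformity: one must establish that the Cesàro average of vertex masses converges to $2/(\mathcal{L}\deg_v)$ for the $\delta$-coupling realization with a smooth bounded potential, and that this convergence is sufficiently uniform in the coupling parameter to legitimise exchanging $\lim_{N\to\infty}$ with the integral over $t$. Controlling the low-lying spectrum (where $V$ and the boundary data genuinely matter) and ensuring that eigenfunction normalisation and possible multiplicities do not spoil the averaging is the delicate part; the uniform bound of~\cite{KS} and the resolvent/heat-kernel comparison underlying~\cite{BHJ,BER} are the tools I would use to push this through.
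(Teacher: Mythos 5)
Your overall route coincides with the paper's: the Feynman--Hellmann formula along the interpolating family (the paper's Lemma~3.2, taken from Kato's theory of type~(B) forms), followed by the local Weyl law obtained from the heat-kernel asymptotics of \cite{BER} via Karamata's Tauberian theorem, and a Ces\`aro average in $n$ with an interchange of $\lim_{N\to\infty}$ and the coupling integral. Your treatment of \eqref{RelationII} by a two-leg path (first $s\mapsto sV$ at $\sigma=0$, then $t\mapsto t\sigma$ at fixed $V$) differs superficially from the paper, which uses the single diagonal path $\tau\mapsto(\tau\sigma,\tau V)$, but the two are equivalent and your bulk computation $\frac1N\sum_{n\le N}|f^{0,sV}_n(x)|^2\to \frac{2}{\mathcal{L}\cdot 2}=\frac1{\mathcal{L}}$ is consistent with the paper's $\deg_x=2$ convention for interior points.

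There is, however, a genuine gap in your justification of the interchange of limit and integral. Writing $G_N(t):=\frac1N\sum_{n=1}^N\sum_{v}\sigma_v|f^{t\sigma,V}_n(v)|^2$, what you need is to pass $\lim_{N\to\infty}$ inside $\int_0^1 G_N(t)\,\mathrm dt$, and for that you need a bound on $G_N(t)$ that is \emph{pointwise in $t$}, uniform in $N$ (or at least uniform integrability of the family $\{G_N\}$). The uniform bound $|d^V_n(\sigma)|\le C$ from \cite{KS} only controls $\int_0^1 G_N(t)\,\mathrm dt$, i.e.\ the $L^1([0,1])$-norms of the $G_N$; pointwise convergence plus a uniform $L^1$ bound does not allow passing the limit under the integral (mass can concentrate on shrinking sets), so ``dominated convergence'' is not legitimised by the \cite{KS} bound as you invoke it. The paper closes exactly this hole in two independent ways: Lemma~3.3 proves $\sup_{n,\tau}\|f^{\tau\sigma,V}_n\|_{L^\infty(\Gamma)}\le C$ (for large $n$ via the plane-wave structure of eigenfunctions and the fundamental-solution asymptotics of \cite{RalfRueckriemen, BER}, for small $n$ via the Sobolev-type inequality of \cite[Lemma~3]{KS} together with Dirichlet-eigenvalue domination), which gives $G_N(t)\le C^2\sum_v|\sigma_v|$ uniformly; alternatively, the appendix establishes $\frac1N\sum_{n\le N}\sum_v|f^{\sigma,V}_n(v)|^2\le C$ uniformly in the coupling and potential, via the Ouhabaz heat-kernel domination $p^{H^V_\sigma}(t;x,y)\le \mathrm e^{(\hat\sigma+\hat V_-)t}p^{H^{V=0}_0}(t;x,y)$ combined with Weyl's law. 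Your closing remark about ``resolvent/heat-kernel comparison'' points in the right direction, but this uniform bound is the actual missing lemma, not a refinement: without it your proof of \eqref{RelationI} and \eqref{RelationII} does not go through.
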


 In \cite[Theorem 1.3]{RR}, the mean-value~\eqref{RelationI} has been investigated for a compact star graph but with Dirichlet boundary conditions (!)~at the edge ends and a $\delta$-coupling condition at the central vertex. Applying our main result to a star graph with $|\mathcal{E}| \in \mathbb{N}$ leads, $\sigma_v=0$ for all vertices forming an edge end and $\sigma_v=\sigma$ for the central vertex, we find that 
	\begin{equation*}\label{StarGraph}
	\lim_{N \rightarrow \infty}\frac{1}{N}\sum_{n=1}^{N} d^V_n(\sigma)=\frac{2\sigma}{|\mathcal{E}|\mathcal{L}}\ .
	\end{equation*}

	\section{Proof of Theorem~\ref{MainResult} and auxiliary results}
	In this section our goal is to prove Theorem~\ref{MainResult}, following the strategy developed in~\cite{RWY}. We also rely on results and methods established in \cite{BER} as well as \cite{BHJ}. In a first result we express $d^V_n(\sigma)$ and $\hat{d}^V_n(\sigma)$ in a new and suitable way, relating them to the boundary values of eigenfunctions.
	\begin{lemma}\label{LemmaI} One has 
		\begin{equation*}\label{Relation}
		d^V_n(\sigma)=\int_{0}^{1}\sum_{v \in \mathcal{V}}\sigma_v|f^{\tau\sigma,V}_n(v)|^2\ \mathrm{d}\tau
		\end{equation*}
		and 
			\begin{equation*}\label{RelationIII}
		\hat{d}^V_n(\sigma)=\int_{0}^{1}\left(\sum_{e \in \mathcal{E}}\int_{0}^{l_e}v_e(x)|f^{\tau\sigma,\tau V}_n(x)|^2\mathrm{d} x+\sum_{v \in \mathcal{V}}\sigma_v|f^{\tau\sigma,\tau V}_n(v)|^2\ \right) \mathrm{d}\tau
		\end{equation*}
		where $\tau\sigma=(\tau\sigma_1,...,\tau\sigma_v)$, $\tau V=(\tau v_e)_{e \in \mathcal{E}}$ and $\tau\in [0,1]$.
	\end{lemma}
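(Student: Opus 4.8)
The plan is to prove both identities as instances of the Feynman--Hellmann (first-order perturbation) formula, applied to suitable one-parameter families interpolating between the compared realizations, and then to integrate the resulting derivative via the fundamental theorem of calculus. For the first identity I would consider the family $\tau \mapsto H^V_{\tau\sigma}$, $\tau \in [0,1]$, interpolating between $H^V_0$ (at $\tau=0$) and $H^V_\sigma$ (at $\tau=1$). The crucial structural observation is that every member of this family is defined through the \emph{same} form domain $H^1(\Gamma)$, because the $\delta$-coupling enters only through the boundary term of the associated quadratic form
\[
\mathfrak{q}_\tau[f] = \sum_{e\in\cE}\int_0^{l_e}\bigl(|f_e'(x)|^2 + v_e(x)|f_e(x)|^2\bigr)\,\mathrm{d}x + \tau\sum_{v\in\cV}\sigma_v|f(v)|^2 .
\]
Thus $\lambda^V_n(\tau\sigma)$ is the $n$-th eigenvalue of $H^V_{\tau\sigma}$, and $d^V_n(\sigma) = \int_0^1 \frac{\mathrm{d}}{\mathrm{d}\tau}\lambda^V_n(\tau\sigma)\,\mathrm{d}\tau$. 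Since $\frac{\mathrm{d}}{\mathrm{d}\tau}\mathfrak{q}_\tau[f] = \sum_{v\in\cV}\sigma_v|f(v)|^2$, the Feynman--Hellmann formula gives $\frac{\mathrm{d}}{\mathrm{d}\tau}\lambda^V_n(\tau\sigma) = \sum_{v\in\cV}\sigma_v|f^{\tau\sigma,V}_n(v)|^2$ at each $\tau$ where the branch is differentiable, and integration in $\tau$ yields the claimed expression.

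For the second identity I would instead interpolate \emph{simultaneously} in the coupling and the potential, using $\tau \mapsto H^{\tau V}_{\tau\sigma}$, which connects $H^{V=0}_0$ (at $\tau=0$) with $H^V_\sigma$ (at $\tau=1$). The corresponding form is
\[
\widetilde{\mathfrak{q}}_\tau[f] = \sum_{e\in\cE}\int_0^{l_e}\bigl(|f_e'(x)|^2 + \tau\,v_e(x)|f_e(x)|^2\bigr)\,\mathrm{d}x + \tau\sum_{v\in\cV}\sigma_v|f(v)|^2 ,
\]
so that differentiation in $\tau$ now produces both the edge (potential) term $\sum_{e}\int_0^{l_e} v_e|f_e|^2\,\mathrm{d}x$ and the vertex term $\sum_{v}\sigma_v|f(v)|^2$. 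Feynman--Hellmann followed by integration over $\tau\in[0,1]$ then reproduces the second formula verbatim, with the eigenfunctions $f^{\tau\sigma,\tau V}_n$ appearing as expected.

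The main obstacle is the rigorous justification of differentiability of the eigenvalue branches and the validity of the Feynman--Hellmann step in the presence of possible eigenvalue crossings. I would resolve this through analytic perturbation theory in the spirit of Kato: since the forms $\mathfrak{q}_\tau$ (resp.\ $\widetilde{\mathfrak{q}}_\tau$) are all defined on the fixed domain $H^1(\Gamma)$ and depend analytically---indeed affinely---on $\tau$, with the vertex term and the bounded potential being relatively form-bounded with respect to $\sum_{e}\int_0^{l_e}|f_e'|^2\,\mathrm{d}x$, the families constitute self-adjoint holomorphic families of type (B). By Rellich's theorem one may then select real-analytic eigenvalue branches together with analytic, $L^2$-normalized eigenfunction branches, and along each such branch the Feynman--Hellmann identity holds classically. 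The ordered eigenvalues $\lambda^V_n(\tau\sigma)$ are Lipschitz (hence absolutely continuous) in $\tau$ and coincide with these analytic branches away from a discrete set of crossing points; consequently the (measure-zero) degeneracies do not affect the integral, and the stated identities follow. The only routine point remaining is to verify that the eigenfunctions in the integrands are precisely the normalized eigenfunctions of $H^V_{\tau\sigma}$ (respectively $H^{\tau V}_{\tau\sigma}$), which is guaranteed by the normalization built into the Feynman--Hellmann computation.
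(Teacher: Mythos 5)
Your proposal is correct and follows essentially the same route as the paper: both treat $\tau\mapsto H^V_{\tau\sigma}$ (resp.\ $\tau\mapsto H^{\tau V}_{\tau\sigma}$) as a self-adjoint holomorphic family of type (B) in the sense of Kato, with the fixed form domain $H^1(\Gamma)$ and affine $\tau$-dependence of the quadratic form, apply the Feynman--Hellmann identity for the derivative of the eigenvalue branches (valid for almost every $\tau$), and integrate over $[0,1]$. Your additional care about Rellich-analytic branches and eigenvalue crossings simply fills in details the paper leaves implicit.
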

	\begin{proof} The statement is a suitable adaptation of~\cite[Lemma~3.1]{RWY} which again is based on the general theory developed in~\cite{Kat66}. More explicitly, in the language of Kato, we are dealing with families of self-adjoint forms of type (B). Note that the quadratic form associated with $H^V_{\sigma}$ is given by 
		\begin{equation}\label{QFrom}
		q_{\sigma,V}[f]:=\sum_{e\in \mathcal{E}}\int_{0}^{l_e}\left(|f^{\prime}_e(x)|^2+v_e(x)\vert f_e(x) \vert^2\right)\ \mathrm{d}x+\sum_{v \in \mathcal{V}}\sigma_v|f_e(v)|^2\ ,
		\end{equation}
		defined on $H^1(\Gamma)$. Consequently,~\cite{Kat66} (or the \emph{Feynman-Hellman theorem}) allows to justify the relations 
		\begin{equation*}
	\frac{\mathrm{d} \lambda^V_n(\tau\sigma)}{\mathrm{d} \tau}=\sum_{v\in \mathcal{V}}\sigma_v |f^{\tau\sigma}_n(v)|^2
		\end{equation*}
		and 
			\begin{equation*}
		\frac{\mathrm{d} \lambda^{\tau V}_n(\tau\sigma)}{\mathrm{d} \tau}=\sum_{e \in \mathcal{E}}\int_{0}^{l_e}v_e(x)|f^{\tau\sigma,\tau V}_n(x)|^2\mathrm{d} x+\sum_{v\in \mathcal{V}}\sigma_v |f^{\tau\sigma,\tau V}_n(v)|^2
		\end{equation*}
		for almost all $\tau \in [0,1]$. From this the claimed relation readily follows by integration.
	\end{proof}
We also need the following auxiliary result which is also interesting in its own right.

\begin{lemma}\label{Dominated}
	Consider a graph $\Gamma$ and a Schrödinger operator $H^V_{\sigma}$. Then, there exists a constant $C > 0$ such that one has
	\[
 \| f_n^{\tau\sigma,V}\|_{L^{\infty}(\Gamma)} \leq C\ ,
	\]
	and 
	\[
	\| f_n^{\tau\sigma,\tau V}\|_{L^{\infty}(\Gamma)} \leq C\ ,
	\]
	for all $n \in \mathbb{N}$ and all $\tau \in [0,1]$.
\end{lemma}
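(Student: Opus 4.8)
The goal is to establish a uniform $L^\infty$ bound on the normalized eigenfunctions $f_n^{\tau\sigma,V}$ (and analogously $f_n^{\tau\sigma,\tau V}$), uniform in both the eigenvalue index $n$ and the coupling parameter $\tau \in [0,1]$. The plan is to reduce the sup-norm control to an $H^1$-norm bound via a Sobolev-type embedding on each edge, and then to control the $H^1$ seminorm of a normalized eigenfunction by its eigenvalue through the quadratic form.

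\medskip

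\noindent\textbf{Step 1: Sobolev embedding on each edge.}
First I would recall that on a bounded interval $(0,l_e)$ one has the continuous embedding $H^1(0,l_e) \hookrightarrow L^\infty(0,l_e)$, with an interpolation-type estimate of the form
\begin{equation*}
\|g\|_{L^\infty(0,l_e)}^2 \leq C_e\left(\|g'\|_{L^2(0,l_e)}\|g\|_{L^2(0,l_e)} + \|g\|_{L^2(0,l_e)}^2\right)\ ,
\end{equation*}
where $C_e$ depends only on $l_e$. Taking the maximum of $C_e$ over the finitely many edges gives a single constant. Since $\Gamma$ has finitely many edges, bounding $\|f\|_{L^\infty(\Gamma)} = \max_e \|f_e\|_{L^\infty(0,l_e)}$ reduces to controlling $\sum_{e}(\|f_e'\|_{L^2}^2 + \|f_e\|_{L^2}^2)$, i.e. the full $H^1(\Gamma)$-norm. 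Thus it suffices to bound $\|f_n^{\tau\sigma,V}\|_{H^1(\Gamma)}$ uniformly.

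\medskip

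\noindent\textbf{Step 2: From the quadratic form to the $H^1$ seminorm.}
Next I would use the eigenvalue equation together with the quadratic form \eqref{QFrom}. For a normalized eigenfunction $f = f_n^{\tau\sigma,V}$ with eigenvalue $\lambda = \lambda_n^V(\tau\sigma)$ one has $q_{\tau\sigma,V}[f] = \lambda$, so
\begin{equation*}
\sum_{e\in\mathcal{E}}\|f_e'\|_{L^2}^2 = \lambda - \sum_{e\in\mathcal{E}}\int_0^{l_e} v_e|f_e|^2\,\mathrm{d}x - \tau\sum_{v\in\mathcal{V}}\sigma_v|f(v)|^2\ .
\end{equation*}
The potential term is bounded by $\|V\|_{L^\infty}$ (finite by assumption). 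The boundary term requires a trace estimate: $|f(v)|^2$ at a vertex can be bounded, via a standard one-dimensional trace inequality, by $\varepsilon\|f_e'\|_{L^2}^2 + C_\varepsilon\|f_e\|_{L^2}^2$ for any $\varepsilon>0$. Choosing $\varepsilon$ small enough (and using $|\tau\sigma_v| \le |\sigma_v|$), the $\|f_e'\|_{L^2}^2$ contribution from the boundary terms is absorbed into the left-hand side. This yields $\|f\|_{H^1(\Gamma)}^2 \le C(1 + |\lambda|)$ with $C$ independent of $n$ and $\tau$.

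\medskip

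\noindent\textbf{Step 3: Normalizing by the eigenvalue.}
The remaining issue is that $|\lambda_n^V(\tau\sigma)|$ grows with $n$, so the bound from Step 2 does not directly give a uniform $L^\infty$ bound. The resolution is that the Sobolev constant also improves with the eigenvalue: scaling heuristics (or plugging the refined estimate $\|f_e\|_{L^\infty}^2 \lesssim \|f_e'\|_{L^2}\|f_e\|_{L^2} + \|f_e\|_{L^2}^2$ from Step 1 with $\|f_e'\|_{L^2} \sim \sqrt{\lambda}$ and $\|f_e\|_{L^2}\le 1$) show that $\|f\|_{L^\infty}^2 \lesssim \sqrt{\lambda}$, which still diverges. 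The correct approach, and the main obstacle, is therefore not the naive bound but a genuinely uniform one. I would appeal to the results of \cite{BER} on uniform sup-norm bounds for eigenfunctions of quantum graph Schrödinger operators, which give $\|f_n^{\sigma,V}\|_{L^\infty} \le C$ independently of $n$; the uniformity in $\tau$ then follows because the constant in \cite{BER} depends only on the geometry of $\Gamma$, on $\|V\|_{L^\infty}$, and on $\max_v|\sigma_v|$, all of which are controlled uniformly for $\tau\in[0,1]$ since $\|\tau V\|_{L^\infty}\le\|V\|_{L^\infty}$ and $|\tau\sigma_v|\le|\sigma_v|$. The hard part is precisely establishing this $n$-uniform bound, which is exactly the content of the cited local-Weyl-law machinery and the reason this lemma is singled out as an auxiliary result: a careful eigenfunction decay/equidistribution argument is needed, rather than the elementary Sobolev estimate, to prevent the bound from growing with the energy.
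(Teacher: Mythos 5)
Your Steps 1 and 2 are sound and in fact reproduce exactly what the paper uses for the low-lying eigenvalues: the one-dimensional estimate $\|f\|^2_{L^{\infty}(0,l)} \leq \varepsilon\|f'\|^2_{L^2(0,l)}+\tfrac{2}{\varepsilon}\|f\|^2_{L^2(0,l)}$ combined with $\|f'\|^2_{L^2(\Gamma)}$ controlled by the eigenvalue (the paper uses the Dirichlet bound $\lambda^{V=0}_n(\infty)$) gives $\|f_n\|^2_{L^{\infty}}\lesssim \sqrt{\lambda_n}$, uniformly in $\tau$ but growing in $n$ --- as you correctly diagnose in Step 3. The genuine gap is that at precisely the point where the lemma's content begins, you stop proving and start citing: you attribute to \cite{BER} a ready-made $n$-uniform sup-norm bound for eigenfunctions of $H^V_{\sigma}$. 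That is not what the paper draws from \cite{BER}; the ingredients actually imported there (\cite[Lemma~5.1]{BER}, \cite[Remark~5.2]{BER}, together with \cite[Theorem~3.2]{RalfRueckriemen}) are only the high-energy statement that fundamental solutions of $-u''+v_eu=k^2u$ are close to plane waves for large $k$. The uniform bound itself is exactly what Lemma~\ref{Dominated} must establish, so your Step 3 is circular as written. Worse, appealing to ``local-Weyl-law machinery'' inverts the logical order of this paper: Proposition~\ref{prop:localweyllaw} is proved \emph{using} Lemma~\ref{Dominated} (via dominated convergence), not the other way around.

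The missing idea is the explicit structure of high-energy eigenfunctions on a compact metric graph. For $V=0$, each component of an eigenfunction with eigenvalue $k_n^2>0$ can be written as $(g_n^{\tau\sigma})_e(x)=a_{n,e}\mathrm{e}^{ik_nx}+b_{n,e}\mathrm{e}^{-ik_nx}$, where the coefficient vector can be chosen as a \emph{unit} eigenvector of the secular matrix encoding the vertex conditions (see \cite{Berkolaiko}), so that $\sum_{e}(|a_{n,e}|^2+|b_{n,e}|^2)=1$; this makes $\|g_n^{\tau\sigma}\|_{L^{\infty}(\Gamma)}$ bounded by an absolute constant. One then checks that the $L^2$-normalization factor $\mu_n(\tau\sigma)=\|g_n^{\tau\sigma}\|_{L^2(\Gamma)}^{-1}$ satisfies $c_1\leq\mu_n(\tau\sigma)\leq c_2$ for all large $n$ and all $\tau\in[0,1]$: the oscillatory cross terms in $\|g_n^{\tau\sigma}\|^2_{L^2}$ are $O(1/k_n)$, so the squared norm stays comparable to $\sum_e l_e\bigl(|a_{n,e}|^2+|b_{n,e}|^2\bigr)$, which is pinched between $\min_e l_e$ and $\max_e l_e$. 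This yields the uniform bound for large $n$, and your Steps 1--2 handle the remaining finitely many $n$ uniformly in $\tau$. For $V\neq 0$ one replaces the plane waves by fundamental solutions, which by \cite[Theorem~3.2]{RalfRueckriemen} are close to plane waves at high energy --- that, and only that, is where the cited literature legitimately enters.
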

\begin{proof} We first assume that $V=0$ and outline a generalization to the case with potential at the end. Any (not necessarily normalized) eigenfunction $g_n^{\tau\sigma}$ to a positive eigenvalue $\lambda^{V=0}_n(\sigma)  > 0$ is a collection of $|\mathcal{E}|$ components 
	\begin{equation}\label{EqProof}
	(g_n^{\tau\sigma})_e(x)=a_{n,e}\mathrm{e}^{ik_nx}+b_{n,e}\mathrm{e}^{-ik_nx}
	\end{equation}
	with certain (complex) coefficients $a_{n,e},b_{n,e}$ that are given as entries of a (unit) eigenvector of a matrix that incorporates the boundary conditions in the vertices (see, for instance,~\cite{Berkolaiko}). Hence, one has
	$$\sum_{e \in \mathcal{E}}(|a_{n,e}|^2+|b_{n,e}|^2)=1\ .$$
	In a next step one chooses a factor $\mu_n(\tau\sigma)$ such that $\mu_n(\tau \sigma)g_n^{\tau\sigma}$ has $L^2(\Gamma)$-norm one. A calculation now shows that there exist constants $c_1,c_2 > 0$ such that, for $n$ large enough, 
	\begin{equation*}
	c_1 \leq \mu_n(\tau\sigma) \leq c_2
	\end{equation*}
	holds for all $\tau \in [0,1]$. This then implies that each normalized eigenfunction is uniformly bounded for all $n$ large enough. Finally, for the remaining small $n$, one employs the well-known inequality (see, e.g.,~\cite[Lemma 3]{KS})
	$$\|f\|^2_{L^{\infty}(0,l)} \leq \varepsilon\|f^{\prime}\|^2_{L^2(0,l)}+\frac{2}{\varepsilon}\|f\|^2_{L^2(0,l)}\ , $$
	which holds for $f \in H^1(0,l)$ and $0 < \varepsilon < l$.  With this we obtain, for $0 < \varepsilon < \min_{e \in \mathcal{E}}l_e$,
	\begin{equation*}\begin{split}
	\|\mu_n(\tau\sigma)g_n^{\tau\sigma}\|^2_{L^{\infty}(\Gamma)} &\leq \varepsilon\|(\mu_n(\tau\sigma)g_n^{\tau\sigma})^{\prime}\|^2_{L^2(\Gamma)}+\frac{2}{\varepsilon}\ , \\
	&\leq \varepsilon \lambda^{V=0}_n(\infty)+\frac{2}{\varepsilon}
	\end{split}
	\end{equation*}
	where $\lambda^{V=0}_n(\infty)$ is the $n$-th Dirichlet eigenvalue (meaning one demands Dirichlet boundary conditions in all vertices). This implies the statement.
	
	For non-vanishing potential $V$, the arguments are similar. In~\eqref{EqProof}, $\mathrm{e}^{ik_nx}$ and $\mathrm{e}^{-ik_nx}$ have to be replaced by suitable fundamental solutions which, for large $k_n$ and as stated in \cite[Theorem~3.2]{RalfRueckriemen}, are close to plane waves and hence the arguments from above can be repeated (see also \cite[Lemma~5.1]{BER} and \cite[Remark~5.2]{BER}).
	\end{proof}
	
	In a next step we derive a so-called \emph{local Weyl law} as discussed, for example, in \cite[Theorem 4.1]{BHJ}. Note that, since we also allow for potentials, we are able to generalize some results of \cite{BHJ}.
	\begin{proposition}[Local Weyl law]\label{prop:localweyllaw}
		One has
		\[
		\lim_{N \rightarrow \infty} \frac{1}{N} \sum_{n=1}^N |(f^{\sigma,V}_n)_e(x)|^2 = \frac{2}{\mathcal{L}\deg_x}\ ,
		\]
	where $\deg_x =2$ if $x \in (0,l_e)$ and $\deg_x=\deg_v$ if $x=v$ with $v\in \mathcal{V}$.
	\end{proposition}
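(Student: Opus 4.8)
The plan is to prove the local Weyl law by integrating a heat-kernel (or resolvent) trace and extracting the leading Weyl asymptotics, localizing the argument at a fixed point $x$. More precisely, for the spectral counting one knows the global Weyl law $N(\lambda) \sim \frac{\mathcal{L}}{\pi}\sqrt{\lambda}$ for Schrödinger operators on $\Gamma$; the goal is to refine this into a \emph{local} statement about $\sum_{n} |(f^{\sigma,V}_n)_e(x)|^2$. I would introduce the spectral function $e(\lambda;x,y)=\sum_{\lambda_n \le \lambda}(f^{\sigma,V}_n)(x)\overline{(f^{\sigma,V}_n)(y)}$ and aim to show that its diagonal $e(\lambda;x,x)$ grows like $\frac{1}{\pi \deg_x}\sqrt{\lambda}$. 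Dividing by $N(\lambda)$ and letting $\lambda\to\infty$ (equivalently, reindexing by $n=N(\lambda)$ and averaging) then yields the claimed limit $\frac{2}{\mathcal{L}\deg_x}$, since the Cesàro mean $\frac{1}{N}\sum_{n=1}^N |(f_n)_e(x)|^2$ is precisely $e(\lambda_N;x,x)/N$.

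First I would treat the free case $V=0$ and establish the pointwise asymptotics of $e(\lambda;x,x)$ using the representation of eigenfunctions as in~\eqref{EqProof}. The cleanest route is to study the resolvent kernel $(H^V_\sigma - z)^{-1}(x,x)$ or, equivalently, the heat kernel on the diagonal, and to compare it with the corresponding kernel for the decoupled problem where each edge carries the same boundary symbol. The factor $\deg_x$ enters exactly here: at an interior point one has two outgoing directions (so $\deg_x=2$), while at a vertex the local density of states splits among the $\deg_v$ incident edges subject to the $\delta$-coupling. The key computation is to show that the vertex conditions contribute only to lower-order (oscillatory) terms in the trace so that the leading coefficient is governed solely by the combinatorial factor $1/\deg_x$. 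I would extract this by a Tauberian argument applied to the Laplace transform of $e(\lambda;x,x)$, whose leading singularity as $z\to 0^+$ encodes the $\sqrt{\lambda}$-growth.

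To pass from $V=0$ to general smooth bounded $V$, I would invoke Lemma~\ref{Dominated} together with the fundamental-solution estimates from~\cite{RalfRueckriemen} and~\cite{BER}: for large $k_n$ the eigenfunction components are uniformly close to plane waves, so the potential perturbs the local density only at subleading order. Combined with the uniform $L^\infty$-bound of Lemma~\ref{Dominated}, this allows dominated convergence and a perturbative comparison of $e^V(\lambda;x,x)$ with $e^0(\lambda;x,x)$, showing the same leading asymptotics. The uniformity in $\tau$ supplied by Lemma~\ref{Dominated} guarantees that finitely many small eigenvalues do not affect the Cesàro limit.

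The main obstacle I anticipate is controlling the \emph{oscillatory} cross terms in the eigenfunction expansion at the evaluation point $x$. In the representation~\eqref{EqProof} the square $|(g^{\tau\sigma}_n)_e(x)|^2$ contains interference terms proportional to $\mathrm{e}^{\pm 2ik_n x}$ whose coefficients depend on the vertex scattering data; these must be shown to average to zero in the Cesàro sense. Establishing this equidistribution of phases — essentially a statement that the eigenvalues $k_n$ do not cluster in a way that resonates with the fixed frequency $2x$ — is the delicate point and is precisely where the local Weyl law of~\cite{BHJ} does the work. I would therefore lean on~\cite[Theorem~4.1]{BHJ} for the potential-free case and reduce the general statement to it via the perturbative comparison above, rather than re-deriving the equidistribution from scratch.
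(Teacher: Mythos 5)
Your skeleton coincides with the paper's actual proof: both routes pass through the on-diagonal spectral function, obtain its $\sqrt{\lambda}$-growth by a Karamata-type Tauberian argument applied to the heat trace $\sum_n \mathrm{e}^{-\lambda^V_n(\sigma)t}\,|(f^{\sigma,V}_n)_e(x)|^2 = p^{H^V_\sigma}_{e,e}(t;x,x)$, and then reindex via the global Weyl law $N \sim \frac{\mathcal{L}}{\pi}\lambda^{1/2}$, with Lemma~\ref{Dominated} justifying the passage from the cutoff $\lambda^V_n(\sigma)\le\lambda$ to $n\le N$. Where you genuinely differ is the source of the diagonal asymptotics: you propose a two-stage argument, first handling $V=0$ (deferring the equidistribution of the oscillatory cross terms $\mathrm{e}^{\pm 2ik_n x}$ to \cite[Theorem~4.1]{BHJ}) and then transferring to general $V$ by a perturbative comparison based on the plane-wave approximation of fundamental solutions. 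The paper short-circuits this entirely: \cite[Proposition~8.2]{BER} already yields $p^{H^V_\sigma}_{e,e}(t;x,x)\sim \frac{1}{\sqrt{4\pi t}}\frac{2}{\deg_x}$ as $t\to 0^+$ \emph{with the potential and the $\delta$-coupling included}, so no separate treatment of the vertex scattering data and no $V=0 \to V$ reduction is needed. Your route is viable, but the perturbative comparison of the spectral functions $e^V(\lambda;x,x)$ and $e^0(\lambda;x,x)$ at leading order is exactly the nontrivial technical content that \cite{BER} supplies, and you would in effect be reproving it; what the paper's choice buys is that all of this is absorbed into a single citable heat-kernel asymptotic.

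Two concrete corrections. First, your intermediate constant is off by a factor of two: Karamata applied to the asymptotics above gives $e(\lambda;x,x)\sim \frac{2}{\pi\deg_x}\lambda^{1/2}$, as in \eqref{agn:vorstufeocalweyllaw}, not $\frac{1}{\pi\deg_x}\lambda^{1/2}$; with your constant, dividing by $N(\lambda)\sim\frac{\mathcal{L}}{\pi}\lambda^{1/2}$ would produce the limit $\frac{1}{\mathcal{L}\deg_x}$, contradicting the statement you set out to prove, so the arithmetic in your first paragraph is internally inconsistent. Second, the identity ``$\frac{1}{N}\sum_{n=1}^N |(f_n)_e(x)|^2$ is precisely $e(\lambda_N;x,x)/N$'' fails when the cutoff eigenvalue $\lambda_N$ is degenerate, since $e(\lambda_N;x,x)$ then counts all members of the eigenvalue cluster, possibly more than $N$ terms. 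The discrepancy is $O(1)$ by the uniform $L^\infty$-bound of Lemma~\ref{Dominated} and hence harmless after dividing by $N$, but this is precisely the role the lemma plays at the reindexing step in the paper's proof, not merely the control of ``finitely many small eigenvalues'' that you assign to it.
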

	\begin{proof}
	Introducing the integral kernel of the operator $\mathrm{e}^{-H_{\sigma}t}$ on $\Gamma$ (the so-called \emph{heat kernel}), for every $e \in \mathcal{E}$ and $x \in [0,l_e]$, \cite[Proposition~8.2]{BER} implies 
	\[
	p_{e,e}^{H^V_{\sigma}}(t;x,x) \sim \frac{1}{\sqrt{4\pi t}} \frac{2}{\deg_x} \:\: \text{as} \:\: t \rightarrow 0^+.
	\]
	On the other hand, one has the following expansion 
	\begin{equation*}
	\sum_{n=1}^{\infty}\mathrm{e}^{-\lambda^V_n(\sigma) t}|(f^{\sigma,V}_n)_e(x)|^2=p_{e,e}^{H^V_{\sigma}}(t;x,x)\ .
	\end{equation*}
	Now, as in~\cite{BHJ}, we can employ Karamata's Tauberian theorem to conclude
	\begin{align}\label{agn:vorstufeocalweyllaw}
	\sum_{\lambda^V_n(\sigma) \leq \lambda} |(f^{\sigma,V}_n)_e(x)|^2 \sim  \frac{2}{\pi \deg_x}\lambda^\frac{1}{2} \:\: \text{as} \:\: \lambda \rightarrow \infty\ .
	\end{align}
	In a last step we can employ Weyl's law and Lemma~\ref{Dominated} to  replace the condition $\lambda^V_n(\sigma) \leq \lambda$ by $n \leq N$. More explicitly, one has $N \sim \frac{\mathcal{L}}{\pi} \lambda^\frac{1}{2}$ or equivalently $(\frac{\pi N}{\mathcal{L}})^2 \sim \lambda$. This yields
	\[
	\sum_{n=1}^N |(f^{\sigma,V}_n)_e(x)|^2 \sim \frac{2N}{\mathcal{L}\deg_x} \:\: \Longleftrightarrow \:\: \frac{1}{N} \sum_{n=1}^N |(f^{\sigma,V}_n)_e(x)|^2 \sim \frac{2}{\mathcal{L}\deg_x}\ ,
	\]
	which shows the claim.
\end{proof}

\begin{remark} Using results of~\cite{BER}, the local Weyl law can also be formulated for more general (local) boundary conditions. More explicitly, assuming self-adjoint boundary conditions in a vertex $v \in \mathcal{V}$ described by a pair of matrices $\left(P_v,L_v\right)$ (with the well-known properties as described, for example, in~\cite{BER}), one obtains 
	\[
\lim_{N \rightarrow \infty} \frac{1}{N} \sum_{n=1}^N |(f^{\sigma,V}_n)_e(v)|^2 = \frac{1+(S_v(\infty))_{e,e}}{\mathcal{L}}\ ,
\]
where $S_v(\infty):=\mathrm{Id}-P_v$, see \cite[Theorem~5.2]{BER}. Evaluating the left-hand side of the previous formula at some $x \in (0,l_e)$, one obtains the same result as in Proposition~\ref{prop:localweyllaw}. 
\end{remark}

We are now able to give the proof of our main result. 
	
	\begin{proof}[Proof of Theorem~\ref{MainResult}] We follow the strategy of~\cite{RWY}.
		Lemma~\ref{LemmaI} gives
		\begin{equation*}\begin{split}
		\frac{1}{N}\sum_{n=1}^{N} d^V_n(\sigma)&=	\frac{1}{N}\sum_{n=1}^{N}\int_{0}^{1}\sum_{v \in \mathcal{V}}\sigma_v|f^{\tau\sigma,V}_n(v)|^2\ \mathrm{d}\tau\\
		&=\int_{0}^{1}\sum_{v \in \mathcal{V}}\sigma_v \left(\frac{1}{N}\sum_{n=1}^{N}|f^{\tau\sigma,V}_n(v)|^2\right)\ \mathrm{d}\tau\ .
		\end{split}
		\end{equation*}
		
		Due to Proposition~\ref{prop:localweyllaw} and Lemma~\ref{Dominated}, which allows to make use of dominated convergence, we obtain
		\begin{equation*}\begin{split}
		\lim_{N \rightarrow \infty} \frac{1}{N}\sum_{n=1}^{N} d^V_n(\sigma) &=\int_{0}^{1}\sum_{v \in \mathcal{V}}\sigma_v \lim_{N \rightarrow \infty}\left(\frac{1}{N}\sum_{n=1}^{N}|f^{\tau\sigma,V}_n(v)|^2\right)\ \mathrm{d}\tau \\
		&=\frac{2}{\mathcal{L}}\cdot \sum_{v \in \mathcal{V}} \frac{\sigma_v}{\deg_v}\ . 
		\end{split}
		\end{equation*}
		In the same fashion, we conclude that 
		\begin{equation*}\begin{split}
		\lim_{N \rightarrow \infty} \frac{1}{N}\sum_{n=1}^{N} \hat{d}^V_n(\sigma)=&\int_{0}^{1}\left(\sum_{e \in \mathcal{E}}\int_{0}^{l_e}v_e(x)\lim_{N \rightarrow \infty}\left(\frac{1}{N}\sum_{n=1}^{N}|f^{\tau\sigma,\tau V}_n(x)|^2\right)\mathrm{d} x \right) \mathrm{d} \tau \\		
		&+\int_{0}^{1}\sum_{v \in \mathcal{V}}\sigma_v \lim_{N \rightarrow \infty}\left(\frac{1}{N}\sum_{n=1}^{N}|f^{\tau\sigma, \tau V}_n(v)|^2\right)\ \mathrm{d}\tau \\
		&=\frac{2}{\mathcal{L}}\left(\sum_{v \in \mathcal{V}} \frac{\sigma_v}{\deg_v}+\frac{1}{2}\sum_{e \in \mathcal{E}}\int_{0}^{l_e}v_e(x)\mathrm{d} x \right)\ .
		\end{split}
		\end{equation*}
	\end{proof}

	\begin{remark} Comparing the result of Theorem~\ref{MainResult} with the results of~\cite[Theorem 1.1]{RWY} allows to introduce some possible notions of \textbf{circumference} for a (compact) quantum graph. Namely, by setting
		\begin{equation}\label{CircumferenceI}
		\mathcal{C}(\Gamma):=\sum_{v \in \mathcal{V}}\frac{1}{\deg_v}
		\end{equation}
		one obtains a combinatorial measure thereof. This definition has an interesting interpretation in terms of ``surface tension'': a water molecule in the middle of the sea (implying its not part of the surface) is surrounded by many other molecules and in this sense its degree so large that it doesn't contribute to the surface as expected. On the other hand, a water molecule at the surface has less neighbours and hence it contributes more to the total surface. In the same spirit one could justify that the boundary of a graph is formed by vertices of degree one. 
		
		Since $\mathcal{C}(\Gamma)$ is, for example, not immune to the introduction of ``dummy vertices'' (meaning extra vertices placed along some edges $e \in \mathcal{E}$ with Kirchoff-Neumann boundary conditions) one might also define an \textbf{effective circumference} via 
		\begin{equation}\label{CircumferenceII}
		\mathcal{C}_{\sigma}(\Gamma):=\sum_{v \in \mathcal{V}}\frac{\sigma_v}{\deg_v}\ .
		\end{equation}
		Since Dirichlet vertices $v\in \mathcal{V}$ formally correspond to vertices with $\sigma_v=\infty$, it seems plausible to conclude that the (effective) boundary of a graph is given by the union of all Dirichlet vertices whenever existing. 
		
		Finally, it might also be interesting to introduce a certain level $\varepsilon > 0$ to say that the ($\varepsilon$-)boundary of a (quantum) graph is made up of all vertices for which $\frac{\sigma_v}{\deg_v} > \varepsilon$. 
		\end{remark}
		\section{Lower and upper bounds on $d^V_n(\sigma)$ and $\hat{d}^V_n(\sigma)$}
	In this section we want to discuss upper and lower bounds on the spectral distances $d^V_n(\sigma)$ and $\hat{d}^V_n(\sigma)$ in the limit $n \rightarrow \infty$. For star graphs with Dirichlet boundary conditions at the edge ends, this was already investigated in \cite[Proposition 1.6]{RR} and it was found that a uniform upper bound exists while a uniform positive lower bound is not available. As described in \cite{RWY}, on domains in $\mathbb{R}^2$ the situation is quite different: For general domains with smooth boundary one is able to derive an upper bound which is of order $n^{\frac{1}{3}}$; however, as shown in \cite[Section 8]{RWY}, on rectangles an uniform upper bound can be derived. In addition to that, for star-shaped domains with a smooth boundary, a uniform positive lower bound can be derived as well, see \cite[Theorem~1.3]{RWY}.
	\begin{proposition}[Upper bound]
		There exists a constant $C=C(\Gamma,V,\sigma) > 0$ such that
		\[
		 |d^V_n(\sigma)|\leq C \quad \text{and} \quad |\hat{d}^V_n(\sigma)| \leq C
		\]
		for every $n \in \mathbb{N}$.
	\end{proposition}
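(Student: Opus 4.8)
The plan is to obtain the uniform bound directly from the integral representations in Lemma~\ref{LemmaI} combined with the uniform sup-norm control of eigenfunctions in Lemma~\ref{Dominated}; no new machinery is required beyond these two auxiliary results, all the analytic work having already been done there. The guiding observation is that a crude min-max comparison is not enough: testing the form $q_{\sigma,V}$ on an optimal subspace for $\lambda^V_n(0)$ and estimating the vertex term by the Sobolev inequality only yields a bound of the shape $\lambda^V_n(\sigma)\le(1+\varepsilon)\lambda^V_n(0)+\mathrm{const}$, which grows with $n$. What saves the uniform statement is precisely that the normalized eigenfunctions do \emph{not} concentrate at vertices, i.e. the $L^\infty$-bound of Lemma~\ref{Dominated}.

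Concretely, I would begin from the first representation in Lemma~\ref{LemmaI}, take absolute values, and estimate each vertex term pointwise by the sup-norm of the corresponding eigenfunction:
\[
|d^V_n(\sigma)|\le \sum_{v\in\mathcal{V}}|\sigma_v|\int_0^1 |f^{\tau\sigma,V}_n(v)|^2\,\mathrm{d}\tau \le \Big(\sum_{v\in\mathcal{V}}|\sigma_v|\Big)\,\sup_{\tau\in[0,1]}\|f^{\tau\sigma,V}_n\|_{L^\infty(\Gamma)}^2 .
\]
Lemma~\ref{Dominated} supplies a constant $C>0$, independent of $n$ and $\tau$, with $\|f^{\tau\sigma,V}_n\|_{L^\infty(\Gamma)}\le C$, so that $|d^V_n(\sigma)|\le C^2\sum_{v\in\mathcal{V}}|\sigma_v|$. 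Since $\mathcal{V}$ is finite and $\sigma\in\mathbb{R}^{|\mathcal{V}|}$ is fixed, the right-hand side is a finite constant independent of $n$, which is the first claim.

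For $\hat{d}^V_n(\sigma)$ I would proceed identically, using the second representation in Lemma~\ref{LemmaI} and the second bound in Lemma~\ref{Dominated}. The vertex contribution is handled exactly as above, and the additional edge contribution is controlled by
\[
\Big|\sum_{e\in\mathcal{E}}\int_0^{l_e}v_e(x)|f^{\tau\sigma,\tau V}_n(x)|^2\,\mathrm{d}x\Big| \le \|f^{\tau\sigma,\tau V}_n\|_{L^\infty(\Gamma)}^2\sum_{e\in\mathcal{E}}\int_0^{l_e}|v_e(x)|\,\mathrm{d}x \le C^2\,\mathcal{L}\,\|V\|_{L^\infty(\Gamma)},
\]
which is finite because $\Gamma$ is compact and $V\in L^{\infty}$. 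Adding the two contributions and integrating over $\tau\in[0,1]$ gives $|\hat{d}^V_n(\sigma)|\le C^2\big(\mathcal{L}\,\|V\|_{L^\infty(\Gamma)}+\sum_{v\in\mathcal{V}}|\sigma_v|\big)$, again uniform in $n$. The asserted dependence $C=C(\Gamma,V,\sigma)$ is transparent from these estimates, the constant from Lemma~\ref{Dominated} depending on $\Gamma$ (and on $V$ in the second case) and the prefactors carrying the dependence on $\sigma$ and on $\|V\|_{L^\infty(\Gamma)}$.

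I do not expect a genuine obstacle in this argument itself: the crux is entirely shifted into Lemma~\ref{Dominated}, whose proof rests on the results of~\cite{KS}. The only point deserving care is to make sure that the $L^\infty$-bound is applied in the form in which it is stated, namely \emph{uniformly} over the interpolation parameter $\tau\in[0,1]$ and over $n\in\mathbb{N}$ simultaneously; this is exactly what Lemma~\ref{Dominated} provides, since the constant there is independent of both. Once that is in hand, the estimates above are routine.
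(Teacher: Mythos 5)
Your argument is correct, but it follows a genuinely different route from the paper. The paper's own proof is a short citation argument: it invokes \cite[Theorem~4]{KS}, which directly yields a uniform bound on $|\hat{d}^V_n(\sigma)|=|\lambda^V_n(\sigma)-\lambda^{V=0}_n(0)|$, and then handles $d^V_n(\sigma)$ by the triangle inequality, $|d^V_n(\sigma)|\leq |\lambda^V_n(\sigma)-\lambda^{V=0}_n(0)|+|\lambda^{V=0}_n(0)-\lambda^V_n(0)|$, applying the same theorem twice. You instead rederive the bound internally: combining the variational (Feynman--Hellmann type) representation of Lemma~\ref{LemmaI} with the $L^\infty$-bounds of Lemma~\ref{Dominated}, uniform in both $n$ and the interpolation parameter $\tau$, you observe in effect that the proof of Theorem~\ref{MainResult} already yields uniform boundedness of the averaged summands, hence of $d^V_n(\sigma)$ and $\hat{d}^V_n(\sigma)$ themselves, with the explicit constants $C^2\sum_{v\in\mathcal{V}}|\sigma_v|$ and $C^2\bigl(\mathcal{L}\,\|V\|_{L^\infty(\Gamma)}+\sum_{v\in\mathcal{V}}|\sigma_v|\bigr)$. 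What each approach buys: yours is self-contained within the paper, avoids the external input \cite[Theorem~4]{KS}, and makes the constant explicit in terms of the sup-norm bound on eigenfunctions; the price is that all the weight now rests on Lemma~\ref{Dominated} (whose proof via the secular-equation/normalization estimates is itself only sketched in the paper) and on the a.e.-differentiability of eigenvalue branches underlying Lemma~\ref{LemmaI}, whereas the paper's proof is independent of both lemmas and hence somewhat more robust as a standalone statement. Your closing point is exactly the right one to check: the estimate must hold uniformly over $\tau\in[0,1]$ and $n\in\mathbb{N}$ simultaneously, and Lemma~\ref{Dominated} as stated does provide precisely this.
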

	\begin{proof} The statement follows from a suitable application of \cite[Theorem~4]{KS} in which a uniform bound on $|\hat{d}^V_n(\sigma)|$ has been derived. Regarding $d^V_n(\sigma)$, the triangle inequality and using \cite[Theorem~4]{KS} twice gives
		\[
		|d^V_n(\sigma)| \leq \vert \lambda^V_n(\sigma) - \lambda^{V=0}_n(0) \vert + \vert \lambda^{V=0}_n(0)-\lambda^V_n(0) \vert \leq C_1(\Gamma,V,\sigma) + C_2(\Gamma,V)
		\]
		for every $n \in \mathbb{N}$.
	\end{proof}
Based on the results of~\cite{RR} as mentioned above, a lower bound for $d^V_n(\sigma)$ cannot be expected to hold. To see this in our setting, we construct a rather simple counterexample.

\begin{proposition}\label{PropStarGraph} Let $\Gamma = \Gamma(\mathcal{E}, \mathcal{V})$ denote a star graph consisting of two edges $\mathcal{E} = \{e_1,e_2 \}$ with same length $l_{e_1}=l_{e_2}=:l > 0$ and three vertices $\mathcal{V} = \{v_c,v_1,v_2\}$ where $v_c \in \mathcal{V}$ denotes the \emph{central vertex} and $v_1,v_2 \in \mathcal{V}$ the two outer vertices (see Figure~1). Furthermore, set $\sigma=(\sigma,0,0)$ with $\sigma > 0$ and $v_e\equiv 0$ for all $e \in \mathcal{E}$. Then,
	\begin{equation*}
	d^{V=0}_{2n}(\sigma)=0\ , \quad \forall n \in \mathbb{N}\ .
	\end{equation*}
\end{proposition}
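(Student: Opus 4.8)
The plan is to exploit the mirror symmetry of the graph: the reflection $R$ that exchanges the two edges $e_1\leftrightarrow e_2$ (and the outer vertices $v_1\leftrightarrow v_2$) while fixing the central vertex $v_c$. Because $v_e\equiv 0$ and $l_{e_1}=l_{e_2}=l$, this reflection commutes with both $H^{V=0}_0$ and $H^{V=0}_{\sigma}$, so $L^2(\Gamma)$ splits into the $R$-symmetric and $R$-antisymmetric subspaces, each reducing both operators. Identifying every edge with $[0,l]$ via the coordinate $x$ with $x=0$ at $v_c$ and $x=l$ at the outer vertex, a symmetric function has the form $(g,g)$ and an antisymmetric one the form $(g,-g)$ for a single profile $g$ on $[0,l]$.

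First I would analyse the antisymmetric sector. Continuity of $(g,-g)$ at the fixed point $v_c$ forces $g(0)=0$; hence the $\delta$-term $\sigma_{v_c}|f(v_c)|^2=\sigma\,|g(0)|^2$ vanishes and the coupling condition $\partial_{v_c}f_{e_1}(v_c)+\partial_{v_c}f_{e_2}(v_c)=\sigma\,g(0)$ is satisfied automatically, coinciding with the Kirchhoff--Neumann condition. Thus $H^{V=0}_0$ and $H^{V=0}_{\sigma}$ act \emph{identically} on this subspace, where $g$ solves $-g''=\lambda g$ with $g(0)=0$ and $g'(l)=0$; this produces the common eigenvalues $a_m=\bigl((2m+1)\pi/(2l)\bigr)^2$, $m=0,1,2,\dots$, shared by both operators.

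Next I would analyse the symmetric sector, where $(g,g)$ satisfies $2g'(0)=\sigma\,g(0)$ at $v_c$ and $g'(l)=0$ at the outer end. For $\sigma=0$ this is the Neumann--Neumann problem with eigenvalues $\nu_m=(m\pi/l)^2$, whereas for $\sigma>0$ it is a Robin--Neumann problem with Robin parameter $\sigma/2$, whose eigenvalues I denote $\mu_m$. The crucial quantitative input is the strict interlacing $\nu_m<\mu_m<a_m$ for every $m$: the lower bound holds because the symmetric quadratic form gains the positive term $(\sigma/2)|g(0)|^2$, which is nonzero on the Neumann eigenfunctions $\cos(m\pi x/l)$; the upper bound holds by monotonicity of Robin eigenvalues in the coupling, since letting $\sigma\to\infty$ degenerates the Robin condition into the Dirichlet condition at $v_c$, whose (symmetric) eigenvalues are precisely the $a_m$, and these are approached but never reached for finite $\sigma$. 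In units of $(\pi/(2l))^2$ this reads $\nu_m=(2m)^2$, $a_m=(2m+1)^2$, and $\mu_m\in\bigl((2m)^2,(2m+1)^2\bigr)$, so no $\mu_m$ equals any $\nu_k$ or $a_k$.

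Finally I would merge the two sectors. Ordering the spectra yields $\nu_0<a_0<\nu_1<a_1<\cdots$ for $H^{V=0}_0$ and $\mu_0<a_0<\mu_1<a_1<\cdots$ for $H^{V=0}_{\sigma}$, the strict interlacing ensuring that each shared value $a_m$ follows immediately after $\nu_m$, respectively $\mu_m$. Hence the common antisymmetric eigenvalues $a_m$ occupy \emph{identical} positions in the two enumerations; enumerating $\lambda_1\le\lambda_2\le\cdots$ these are exactly the even-indexed slots, so that $d^{V=0}_{2n}(\sigma)=\lambda^{V=0}_{2n}(\sigma)-\lambda^{V=0}_{2n}(0)=a_{n-1}-a_{n-1}=0$ for every $n$, while the remaining (odd-indexed) gaps equal $\mu_m-\nu_m>0$ and are genuinely nonzero. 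The main obstacle is exactly this index bookkeeping: one must guarantee that no symmetric eigenvalue of $H^{V=0}_{\sigma}$ overtakes the next antisymmetric eigenvalue, which is secured by the strict bound $\mu_m<a_m$ coming from the monotonicity of Robin eigenvalues and the fact that the Dirichlet value is attained only in the limit $\sigma\to\infty$.
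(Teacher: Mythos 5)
Your proposal is correct, and its skeleton coincides with the paper's proof: both split the spectrum by the reflection symmetry into antisymmetric eigenfunctions, which vanish at $v_c$ and are therefore common to $H^{V=0}_0$ and $H^{V=0}_{\sigma}$, and symmetric ones, and both then pin the shared eigenvalues to the even-indexed slots. Where you differ is in how the crucial ordering is established: the paper writes each component as $a_e\cos(k_n x)$ and directly compares the solution sets of the two secular equations \eqref{EqX} and \eqref{EqY}, compressing the bookkeeping into one sentence (``comparing the solutions\dots since all eigenvalues are simple''), whereas you reduce to half-interval problems (Dirichlet--Neumann in the antisymmetric sector, Robin condition $2g'(0)=\sigma g(0)$ with Neumann at $x=l$ in the symmetric sector) and derive the interlacing $\nu_m\le\mu_m<a_m$ from min-max monotonicity together with the strict monotone convergence of Robin eigenvalues to the Dirichlet--Neumann values $a_m$ as $\sigma\to\infty$, i.e.\ $\mu_m\in\bigl((2m)^2,(2m+1)^2\bigr)$ in units of $(\pi/(2l))^2$, without ever solving $\tan(kl)=\sigma/(2k)$. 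This buys a fully explicit merge of the two sectors --- in fact your ordering only needs $\mu_m<a_m$ strictly and $\mu_m\ge\nu_m$ non-strictly, since $\nu_m=(2m)^2>(2m-1)^2=a_{m-1}$ already separates consecutive antisymmetric values --- and an argument that would survive perturbations of the symmetric sector for which explicit secular equations are unavailable. One small remark: your strictness argument for $\nu_m<\mu_m$ (``the form gains a positive term nonzero on the Neumann eigenfunctions'') is heuristic as stated, since a larger form evaluated on old eigenfunctions does not by itself force strict eigenvalue increase; a clean substitute is Feynman--Hellmann, $\frac{\mathrm{d}\mu_m}{\mathrm{d}\sigma}=\frac{1}{2}\vert g(0)\vert^2>0$, where $g(0)\neq 0$ because a Robin eigenfunction with $g(0)=0$ would also satisfy $g'(0)=0$ and hence vanish identically --- though, as just noted, strictness of the lower bound is not actually needed for the conclusion $d^{V=0}_{2n}(\sigma)=0$.
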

\begin{figure}[h]
	\begin{tikzpicture}[scale=0.60]
	\tikzset{enclosed/.style={draw, circle, inner sep=0pt, minimum size=.15cm, fill=gray}, every loop/.style={}}
	\node[enclosed, label={below: $v_1$}] (V1) at (0,4) {};
	\node[enclosed, label={below: $v_2$}] (V2) at (10,4) {};
	\node[enclosed, label={below: $v_c$}] (C) at (5,4) {};
	
	\draw (V1) edge [bend left=25] node[above] {$e_1$} (C) node[midway, above] (edge1) {};
	\draw (C) edge [bend right=25] node[below] {$e_2$} (V2) node[midway, above] (edge2) {};
	\end{tikzpicture}
	\vspace{-1.5cm}
	\caption{$2$-star with central vertex $v_c$ and boundary vertices $v_1,v_2$}
\end{figure}
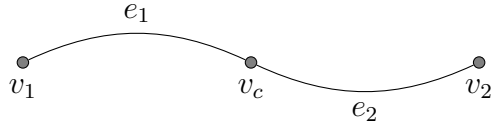

\begin{proof}[Proof of Proposition~\ref{PropStarGraph}] The statement follows from the simple observation that all even eigenfunctions of the operator $(-\Delta,\mathcal{D}_0)$ on the considered star graph are also eigenfunctions of the operator $(-\Delta,\mathcal{D}_\sigma)$ to the same eigenvalue since each such eigenfunction is zero at the central vertex $v_c$. Furthermore, the order of the eigenfunctions agree and this shows that $d_{2n}(\sigma)=0$ for all $n \in \mathbb{N}$. 
	
	To see this, note that each component of an eigenfunction corresponding to the eigenvalue $\lambda_n(\sigma)=k^2_n(\sigma)$ can be written as
	\begin{equation*}
	(f^{\sigma}_n)_e(x)=a_e \cos(k_n(\sigma) x)\ ,
	\end{equation*}
	where $x=0$ corresponds to the outer vertex in each case. Now, the vertex conditions at the central vertex $v_c$ either imply 
	\begin{equation}\label{EqX}
	\cos(k_n(\sigma) l)=0
	\end{equation}
	in combination with $a_1=-a_2$ (this gives the anti-symmetric eigenfunctions) or one has $a_1=a_2$ in combination with
	\begin{equation}\label{EqY}
	\tan(k_n(\sigma)l)=\frac{\sigma}{2k_n(\sigma)}\ ,
	\end{equation}
	yielding the symmetric eigenfunctions. Comparing the solutions of \eqref{EqX} and \eqref{EqY} one concludes that all even eigenvalues are squares of solutions of~\eqref{EqX} which also equal $\lambda_{2n}(0)$. Since all eigenvalues are simple, this gives the statement.
\end{proof}
\appendix
\section{An alternative to Lemma~\ref{Dominated}}
In this appendix we establish a version of Lemma~\ref{Dominated} which is sufficient to prove our main result by allowing to use dominated convergence in the proof of Theorem~\ref{MainResult}; it forms an analogue to \cite[Lemma~4.1]{RWY}. More explicitly, we show the following statement.

\begin{lemma}\label{DominatedI}
	Let $\Gamma$ be a graph with properties described above and $\hat{\sigma},\hat{V}_{-},\hat{V}_{+} \geq 0$ some constants. Then, there exists a constant $C=C(\Gamma,\hat{\sigma},\hat{V}_{-},\hat{V}_{+} ) > 0$ such that for every Schrödinger operator $H^V_{\sigma}$ with $\sigma = (\sigma_1,\dots,\sigma_{\vert \mathcal{V} \vert}) \in [-\hat{\sigma},\infty)^{\vert \mathcal{V} \vert}$ and potential $V$ such that $\|V_{-}\|_{L^{\infty}(\Gamma)} \leq \hat{V}_{-}$, $\|V\|_{L^{\infty}(\Gamma)} \leq \hat{V}_+$ one has, for every $N \in \mathbb{N}$,
	\[
	\frac{1}{N} \sum_{n=1}^N \sum_{v \in \mathcal{V}} \vert f_n^{\sigma,V}(v)\vert^2 \leq C \ 
	\]
	and, for for every $N \in \mathbb{N}$ and every $x \in (0,l_e)$,
	\[
	\frac{1}{N} \sum_{n=1}^N \vert (f_n^{\sigma,V})_e(x)\vert^2 \leq C\ .
	\]
\end{lemma}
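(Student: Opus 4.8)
The plan is to produce bounds that are uniform over the whole parameter family $\sigma \in [-\hat\sigma,\infty)^{|\mathcal{V}|}$ and $V$ with $\|V_-\|_\infty \le \hat V_-$, $\|V\|_\infty \le \hat V_+$, by reducing the $\delta$-coupling eigenvalue problem to a Dirichlet comparison. The key observation is the same one that powers Lemma~\ref{Dominated}: for an $H^1(0,l)$ function one has the interpolation inequality $\|f\|_{L^\infty(0,l)}^2 \le \varepsilon \|f'\|_{L^2}^2 + \tfrac{2}{\varepsilon}\|f\|_{L^2}^2$ valid for $0<\varepsilon<l$. Applied to a normalized eigenfunction $f_n^{\sigma,V}$ on each edge and summed, this controls both the vertex values $|f_n^{\sigma,V}(v)|^2$ and the interior values $|(f_n^{\sigma,V})_e(x)|^2$ by $\varepsilon\,q_{\sigma,V}[f_n^{\sigma,V}] + \tfrac{2}{\varepsilon}$, once one absorbs the potential term. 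The crucial point is that $q_{\sigma,V}[f_n^{\sigma,V}] = \lambda_n^V(\sigma)$ by definition of the quadratic form~\eqref{QFrom}, so the per-eigenfunction bound is governed by the eigenvalue itself.

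First I would fix $\varepsilon_0 := \tfrac{1}{2}\min_{e\in\mathcal{E}} l_e$ and, using the interpolation inequality edge-by-edge, establish the pointwise estimate
\[
|(f_n^{\sigma,V})_e(x)|^2 \le \varepsilon_0 \|(f_n^{\sigma,V})_e'\|_{L^2(0,l_e)}^2 + \frac{2}{\varepsilon_0}
\]
and the analogous bound at vertices. Summing the kinetic energies over all edges gives $\sum_e \|(f_n^{\sigma,V})_e'\|^2 = q_{\sigma,V}[f_n^{\sigma,V}] - \sum_e\int v_e|f_n|^2 - \sum_v \sigma_v|f_n(v)|^2$. The potential integral is bounded below by $-\hat V_-$, which is harmless. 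The genuinely negative contribution to the kinetic energy can only come from vertices with $\sigma_v<0$; but since $\sigma_v \ge -\hat\sigma$, each such term is bounded by $\hat\sigma\,|f_n(v)|^2$, and these vertex values are themselves what we are trying to bound. I would therefore treat this circularity by first proving the vertex-value part of the lemma, where the $\sigma_v|f_n(v)|^2$ terms can be moved to the left-hand side and absorbed for $\varepsilon$ small, yielding $\|f_n'\|^2 \le \lambda_n^V(\sigma) + C'$.

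Second, I would bound the averaged sums by a Dirichlet comparison. The sharpest route is a min-max/bracketing argument: since adding a $\delta$-coupling with $\sigma_v \ge -\hat\sigma$ and a potential with $\|V\|_\infty \le \hat V_+$ perturbs the standard form in a controlled way, one has $\lambda_n^V(\sigma) \le \lambda_n^{V=0}(\infty) + C''$ for all $n$, where $\lambda_n^{V=0}(\infty)$ is the $n$-th Dirichlet eigenvalue — or, more directly, Weyl's law gives $\lambda_n^V(\sigma) \le (1+o(1))(\pi n/\mathcal{L})^2$ uniformly in the parameters. Inserting this into the per-eigenfunction bound and averaging, the constant term $\tfrac{2}{\varepsilon_0}$ survives while the eigenvalue terms average to a quantity of the same order; crucially, after the Dirichlet comparison the averaged kinetic energies telescope against the eigenvalue counting so that $\tfrac{1}{N}\sum_{n\le N}\|f_n'\|^2$ stays bounded by a constant depending only on $\Gamma,\hat\sigma,\hat V_-,\hat V_+$. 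This delivers both displayed inequalities with a single constant $C$.

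The main obstacle I anticipate is making the uniformity in $\sigma$ and $V$ rigorous while handling the sign indefiniteness: unlike the pointwise $L^\infty$ bound of Lemma~\ref{Dominated}, here one cannot simply pass to large $n$ and discard finitely many eigenfunctions, because the averaged quantity must be controlled for \emph{every} $N$, including small $N$ where low eigenvalues may even be negative (when $\hat\sigma>0$ or $\hat V_->0$). The negative eigenvalues must be bounded below uniformly — this follows from $\lambda_0^V(\sigma) \ge -C(\Gamma,\hat\sigma,\hat V_-)$ via a standard form-boundedness estimate — and the interpolation inequality must be applied carefully so that $q_{\sigma,V}[f_n^{\sigma,V}]$, rather than $|\lambda_n^V(\sigma)|$, appears, since the two differ precisely by the potentially negative vertex and potential terms that we are simultaneously trying to estimate. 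Resolving this absorption cleanly, with explicit tracking of how the constant depends on $\hat\sigma,\hat V_-,\hat V_+$, is the technical heart of the argument.
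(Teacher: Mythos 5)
There is a genuine gap at the heart of your argument: the claim that ``the averaged kinetic energies telescope against the eigenvalue counting so that $\tfrac{1}{N}\sum_{n\le N}\|(f_n^{\sigma,V})'\|_{L^2}^2$ stays bounded by a constant'' is false. For a normalized eigenfunction one has $\sum_{e}\|(f_n^{\sigma,V})_e'\|_{L^2}^2=\lambda_n^V(\sigma)-\sum_e\int v_e|f_n|^2-\sum_v\sigma_v|f_n(v)|^2\approx\lambda_n^V(\sigma)$, and by Weyl's law $\lambda_n^V(\sigma)\sim(\pi n/\mathcal{L})^2$, so $\tfrac1N\sum_{n\le N}\|f_n'\|^2\sim cN^2\to\infty$: there is nothing to telescope. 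Even optimizing the interpolation inequality in $\varepsilon$ (taking $\varepsilon\sim\lambda_n^{-1/2}$ when admissible) yields only the per-eigenfunction bound $\|f_n^{\sigma,V}\|_{L^\infty(\Gamma)}^2\lesssim\sqrt{\lambda_n^V(\sigma)}\sim n$, whose Ces\`aro average still diverges like $N$. More fundamentally, no termwise estimate of the form $|f_n(x)|^2\le g(\lambda_n)$ with unbounded $g$ can prove the lemma; its content is precisely that $\sum_{n\le N}|f_n(x)|^2$ grows only \emph{linearly} in $N$, which reflects cancellation (equidistribution) across different eigenfunctions and therefore requires an argument treating the whole spectral sum at once. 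Your preliminary absorption scheme for negative $\sigma_v$ and $V_-$ (form-boundedness, uniform lower bound on the bottom of the spectrum) is sound as far as it goes, but it only feeds into this irreparably lossy averaging step.

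The paper's proof avoids the issue by working with the heat kernel. Ouhabaz's domination criterion gives $p^{H^V_\sigma}(t;x,y)\le \mathrm{e}^{(\hat\sigma+\hat V_-)t}\,p^{H^{V=0}_0}(t;x,y)$, whence the on-diagonal bound $p^{H^V_\sigma}_{e,e}(t;x,x)\le C/\sqrt{t}$ for small $t$, uniformly over the admissible class of $(\sigma,V)$. Then, since Dirichlet bracketing gives $\lambda_n^V(\sigma)\le\lambda_N^V(\infty)$ for all $n\le N$, each weight satisfies $\mathrm{e}^{1-\lambda_n^V(\sigma)/\lambda_N^V(\infty)}\ge 1$, so
\begin{equation*}
\sum_{n=1}^N|f_n^{\sigma,V}(x)|^2\le \mathrm{e}\sum_{n=1}^\infty \mathrm{e}^{-\lambda_n^V(\sigma)/\lambda_N^V(\infty)}|f_n^{\sigma,V}(x)|^2=\mathrm{e}\, p^{H^V_\sigma}_{e,e}\Bigl(\tfrac{1}{\lambda_N^V(\infty)};x,x\Bigr)\le C\sqrt{\lambda_N^V(\infty)}\lesssim N
\end{equation*}
by a further comparison $\lambda_N^V(\infty)\le\lambda_N^{V=\hat V_+}(\infty)$ and Weyl's law. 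The exponential weighting converts the heat-trace bound into exactly the $O(N)$ growth needed, something invisible to any per-eigenfunction Sobolev interpolation. To repair your proposal you would need to replace its entire second half with such a heat-kernel (or equivalent Tauberian/spectral-counting) argument.
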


 The proof of Lemma~\ref{DominatedI} uses a domination argument for the heat kernel $p^{H^V_\sigma}(t;x,y)$ of the semigroup $(e^{-tH^V_\sigma})_{t \geq 0}$ based on an abstract argument by Ouhabaz (see~\cite{Ouh05}): Indeed, consider the form $q_{0,V=0}$ associated with the Laplacian and standard Kirchhoff-Neumann conditions and the form $q_{\sigma,V}+(\hat{\sigma}+\hat{V}_-)\cdot \|\cdot\|^2_{L^2(\Gamma)}$ associated with operator $H^V_{\sigma}+\hat{\sigma}+\hat{V}_-$ and $\delta$-type vertex conditions (compare with~\eqref{QFrom}). Then, for non-negative functions in the form domain $H^1(\Gamma)$, one has $q_{0,V=0}[f,g] \leq q_{\sigma,V}[f,g]+(\hat{\sigma}+\hat{V}_-)\langle f,g \rangle_{L^2(\Gamma)}$ for all non-negative $f,g \in H^1(\Gamma)$. \cite[Theorem 2.21]{Ouh05} then implies 
\[
\vert \mathrm{e}^{-t H^{V}_\sigma} f \vert \leq  \mathrm{e}^{(\hat{\sigma}+\hat{V}_-)t}\cdot \mathrm{e}^{-t H^{V=0}_0} \vert f \vert 
\] 
for every $f \in L^2(\Gamma)$ and $t \geq 0$. This eventually yields

\begin{align}\label{agn:15}
p^{H^{V}_\sigma}(t;x,y) \leq \mathrm{e}^{(\hat{\sigma}+\hat{V}_-)t}p^{H^{V=0}_0}(t;x,y) 
\end{align}
for every $x,y \in \Gamma$ and $t \geq 0$. 

\begin{proof}
	Using \cite[Proposition~8.1]{BER} as well as \eqref{agn:15}, we conclude that for each vertex $v \in \mathcal{V}$ one has
	\begin{align}\label{agn:wurzelabschätzungheatkernel}
	p^{H^{V}_\sigma}_{e,e}(t;v,v) \leq \frac{C_1(\Gamma,\hat{\sigma},\hat{V}_-)}{\sqrt{t}}
	\end{align} 
	for some constant $C_1(\Gamma,\hat{\sigma},\hat{V}_-) > 0$ (which does not depend on the considered $\sigma$ or $V$) and $t$ small enough. 
	
The eigenvalues $\lambda^V_n(\sigma)$ of $H^V_\sigma$ are increasing and clearly dominated by the corresponding Dirichlet eigenvalues $\lambda^V_n(\infty)$ (which are associated with the same Schrödinger operator but on an graph $\Gamma$ with a Dirichlet boundary condition in each vertex $v \in \mathcal{V}$) we have $0 \leq 1-\frac{\lambda_n(\sigma)}{\lambda_N(\infty)}$ for every $1 \leq n \leq N$. Therefore,
	\begin{align}\label{agn:eigenvalueexpansionestimate}
	\sum_{n=1}^N \sum_{v \in \mathcal{V}} \vert f_n^{\sigma,V}(v) \vert^2 &\leq \sum_{n=1}^N \mathrm{e}^{1-\frac{\lambda^V_n(\sigma)}{\lambda^V_N(\infty)}}\sum_{v \in \mathcal{V}} \vert f_n^{\sigma,V}(v) \vert^2 = \mathrm{e} \sum_{n=1}^N \mathrm{e}^{-\frac{\lambda^V_n(\sigma)}{\lambda^V_N(\infty)}}\sum_{v \in \mathcal{V}} \vert f_n^{\sigma,V}(v) \vert^2 \nonumber \\&= \mathrm{e} \sum_{v \in \mathcal{V}} \sum_{n=1}^N \mathrm{e}^{-\frac{\lambda^V_n(\sigma)}{\lambda^V_N(\infty)}}\vert f_n^{\sigma,V}(v) \vert^2 \leq \mathrm{e} \sum_{v \in \mathcal{V}} p^{H_\sigma}_{e,e}\bigg(\frac{1}{\lambda^V_N(\infty)};v,v\bigg)\ ,
	\end{align}
	where we made use of the relation 
	\begin{equation}\label{RelationXXX}
	p_{e,e}^{H_\sigma}(t;v,v) = \sum_{n=1}^\infty \mathrm{e}^{-t\lambda^V_n(\sigma)} \vert f_n^{\sigma,V}(v)\vert^2\ .
	\end{equation}
	The estimate in~\eqref{agn:wurzelabschätzungheatkernel} yields
	\begin{equation*}\begin{split}
	\sum_{v \in \mathcal{V}} p^{H_\sigma}_{e,e}\bigg(\frac{1}{\lambda^V_N(\infty)};v,v\bigg) &\leq C_2(\Gamma,\hat{\sigma},\hat{V}_-) \sqrt{\lambda^V_N(\infty)} \\
	&\leq C_2(\Gamma,\hat{\sigma},\hat{V}_-) \sqrt{\lambda^{V = \hat{V}_+}_N(\infty)} 
	\end{split}
	\end{equation*}
	for $N \in \mathbb{N}$ large enough and $C_2(\Gamma,\hat{\sigma},\hat{V}_-) > 0$ some constant. We combine this with \eqref{agn:eigenvalueexpansionestimate} and Weyl's law which shows that $\lambda^{\hat{V}_+}_N(\infty) \sim \frac{\pi^2}{\mathcal{L}^2}N^2$ (note that this implies the dependence of the constant $C(\Gamma,\hat{\sigma},\hat{V}_-,\hat{V}_+)$ on $\hat{V}_+$). This gives the first statement.
	
	Finally, repeating the argument for some $x \in (0,l_e)$, we eventually arrive at 
	\begin{equation*}
	\sum_{n=1}^N \vert (f_n^{\sigma,V})_e(x)\vert^2  \leq C_3(\Gamma,\hat{\sigma},\hat{V}_-)  \cdot p^{H^{V=0}_0}_{e,e}\bigg(\frac{1}{\lambda^V_N(\infty)};x,x\bigg)
	\end{equation*}
	for some constant $C_3(\Gamma,\hat{\sigma},\hat{V})  > 0$. Using relation~\eqref{RelationXXX} and Lemma~\ref{Dominated} to bound the eigenfunctions of the operator $H^{V=0}_0$ as well as the heat-kernel asymptotics \cite[(8.14)]{BER}, we conclude 
	\begin{equation*}\begin{split}
	\sum_{n=1}^N \vert (f_n^{\sigma,V})_e(x)\vert^2  &\leq C_4(\Gamma,\hat{\sigma},\hat{V}_-)\sqrt{\lambda^V_N(\infty)} \\
	&\leq C_4(\Gamma,\hat{\sigma},\hat{V}_-)  \sqrt{\lambda^{V=\hat{V}_+}_N(\infty)} 
	\end{split}
	\end{equation*}
	for some constant $C_4(\Gamma,\hat{\sigma},\hat{V}_-) > 0$ independent of $x$. Weyl's law then gives the statement as above.
\end{proof}

	\subsection*{Acknowledgement}{PB was supported by the Deutsche Forschungsgemeinschaft DFG (Grant 397230547). We thank Delio Mugnolo (Hagen) for helpful remarks. JK also enjoyed interesting discussions with Matthias Täufer (Hagen) during the Mini-Workshop (A Geometric Fairytale full of Spectral Gaps and Random Fruit) at MFO. We thank Pavel Kurasov (Stockholm) for an interesting question during the recent QGraph meeting which lead to a generalization of Theorem~\ref{MainResult}.}
	
	\vspace*{0.5cm}
	
	{\small
		\bibliographystyle{amsalpha}
		\bibliography{Literature}}

\end{document}